\newtheorem{Le}{Lemma}
\newtheorem{theo}{Theorem} 
\theoremstyle{definition}
\newtheorem{re}{Remark}
\newcommand{\PP}{\mathcal{P}}
\newcommand{\XX}{\mathcal{X}}
\newcommand{\YY}{\mathcal{Y}}
\newcommand{\DD}{\mathcal{D}}
\newcommand{\CC}{\mathcal{C}}
\newcommand{\BB}{\mathcal{B}}
\newcommand{\MS}{\mathcal{S}}
\newcommand{\xx}{\bm{\mathrm{x}}}
\newcommand{\yy}{\bm{\mathrm{y}}}
\newcommand{\zz}{\bm{\mathrm{z}}}
\newcommand{\cc}{\bm{\mathrm{c}}}
\newcommand{\BN}{\bm{N}}
\newcommand{\CN}{\mathcal{CN}}
\newcommand{\FR}{\mathfrak{R}}
\newcommand{\FI}{\mathfrak{I}}
\newcommand{\hh}{\hat{h}}
\newcommand{\uh}{\underline{h}}
\newcommand{\oh}{\overline{h}}
\newcommand{\ola}{\lambda^{ave}}
\newcommand{\SM}{M^*(n,P,\lambda_1,\lambda_2)}
\newcommand{\er}{\epsilon(\rho)}
\def\BibTeX{{\rm B\kern-.05em{\sc i\kern-.025em b}\kern-.08em
    T\kern-.1667em\lower.7ex\hbox{E}\kern-.125emX}}
\begin{document}

\title{Deterministic Identification over Channels without CSI} 

 \author{%
   \IEEEauthorblockN{Yuan Li\IEEEauthorrefmark{2}\IEEEauthorrefmark{3}\IEEEauthorrefmark{1},
                     Xianbin Wang\IEEEauthorrefmark{1},
                     Huazi Zhang\IEEEauthorrefmark{1},                    
                     Jun Wang\IEEEauthorrefmark{1},    
                     Wen Tong\IEEEauthorrefmark{1},                 
                     Guiying Yan\IEEEauthorrefmark{2}\IEEEauthorrefmark{3}
                     and Zhiming Ma\IEEEauthorrefmark{2}\IEEEauthorrefmark{3}}
  \IEEEauthorblockA{\IEEEauthorrefmark{2}%
                     University of Chinese Academy of Sciences, \IEEEauthorrefmark{3}%
                     Academy of Mathematics and Systems Science, CAS}
   \IEEEauthorblockA{\IEEEauthorrefmark{1}%
                     Huawei Technologies Co. Ltd.}
    Email: liyuan181@mails.ucas.ac.cn, \{zhanghuazi, wangxianbin1, justin.wangjun, tongwen\}@huawei.com,\\
         yangy@amss.ac.cn, mazm@amt.ac.cn  }
\maketitle

\begin{abstract}

Identification capacities of randomized and deterministic identification were proved to exceed channel capacity for Gaussian channels \emph{with} channel side information (CSI). In this work, we extend deterministic identification to the block fading channels \emph{without} CSI by applying identification codes for both channel estimation and user identification. We prove that identification capacity is asymptotically higher than transmission capacity even in the absence of CSI. And we also analyze the finite-length performance theoretically and numerically. The simulation results verify the feasibility of the proposed blind deterministic identification in finite blocklength regime.
\end{abstract}

\begin{IEEEkeywords}
deterministic identification, fading channels
\end{IEEEkeywords}

\section{Introduction}

In the classical Shannon transmission framework \cite{b2}, a sender transmits a message from message set $\mathcal{M}$ to the  receiver through a noisy channel $W(y|x)$. And the receiver needs to correctly decode that message. In \cite{b3}, Ahlswede and Dueck introduced a different identification framework. In identification,  receiver-$j$ only cares whether message-$j$ is transmitted. If receiver-$j$ believes message-$j$ is not sent, it does not attempt to decode that message. In other words, the receiver faces a hypothesis-testing problem with two hypotheses, which is much simpler compared to the Shannon transmission problem with $|\mathcal{M}|$ hypotheses. 

In randomized identification, each receiver has its own codebook, and message-$j$ is sent by randomly transmitting a codeword from receiver-$j$'s codebook. Receivers decide if $y$ belongs to the decoding region of a codeword from its own codebook. A type-I error occurs when the target receiver misses its message, and a type-II error occurs when the non-target receiver accepts a false message. In \cite{b3}\cite{b4}, the number of messages were  proved to scale double exponentially in the blocklength with vanishing type-I and type-II error rates.

 Identification framework was generalized to broadcast channel \cite{b5}, Multiple-Input and Multiple-Output (MIMO) channel \cite{b6} and compound channel (CC) \cite{b7a}. And feedback was proved to incease the identification capacity \cite{b7} of memoryless channels. 

Explicit constructions of constant-weight codes for identification were proposed in \cite{b8}\cite{b9}. However, the implementation requires computation in a large alphabet, which suffers from high complexity and latency. In deterministic identification, each receiver has a codebook with only one codeword, thus local randomness is no longer needed, which greatly reduces the computation complexity. Deterministic identification via binary symmetric channels was first discussed in \cite{b10}. In \cite{b3}, deterministic identification capacity was provided for discrete memoryless channel (DMC) without proof, and in the later work \cite{r10}, there was a gap in the proof. Finally, the complete proof was provided in \cite{b11}. Even as the number of messages grows exponentially, deterministic identification capacity can still be significantly higher than channel capacity for DMC, and  is proved to be infinite in the exponential scale over continuous input fading channel \emph{with} CSI \cite{b12}. 

Although deterministic identification is attractive from the theoretical point of view, its benefits deserve further study in practical scenarios. In this paper, we prove some promising results under very practical constraints, i.e., deterministic identification with unknown/inaccurate CSI in finite-length regime. In particular, we coin deterministic identification \emph{without} CSI as \emph{blind deterministic identification}. Though the decoding regions of identification codes can be constructed in a maximum-likelihood (ML) manner without channel estimation, it is not considered in this paper due to high complexity. In this work, we utilize channel estimation to compute decoding metrics. On the one hand, these metrics do not reduce the achievable rate asymptotically. On the other hand,  channel estimation is necessary in most scenarios. 

In this paper, we prove that in block fading channel, the blind identification rate can still exceed transmission rate. Moreover, we analyze the type-I and type-II error rates in finite blocklength regime. As expected, there is a tradeoff between the missed detection rate (type-I error) and the false activation rate (type-II error). 

In practice, type-I errors are more harmful than type-II errors because the latter can be further reduced by a cyclic redundancy check (CRC). If a type-II error occurs in identification codes, a non-target user considers itself as the target user at first, it will know the truth after the CRC check fails. Nevertheless, if a type-I error occurs, the target user will discard the message, which cannot be recovered in subsequent decoding. In view of the reasons mentioned above, in our design of the identification code, we bound the type-I error rate below the target error rate while minimize type-II error rate. If the type-II error rate is higher than the target error rate, CRC can bound the type-II error rate below the target error rate. For example, we apply blind identification codes as pilot symbols in pilot-assisted transmission (PAT) schemes for both channel estimation and user identification. Simulation results verify the high efficiency of the blind identification performance in finite blocklength regime. 

This paper is organized as follows. In section II, we review the background of identification and deterministic identification. In section III we provide the identification capacity analysis in both asymptotic and nonasymptotic regimes. Simulation results are provided in section IV for block fading channels. Finally we draw conclusions in section V.

\section{Background}
\subsection{Identification}
A stochastic matrix $W=\{W(y|x), \ x \in \XX, \ y \in \YY)\}$ defines a DMC, where $\XX$ and $\YY$ are finite sets with the transmission probability
\begin{align}
W^n(y^n|x^n) = \prod\limits_{i=1}^n W(y_i|x_i),
\end{align}
for length-$n$ sequences $x^n = (x_1, \cdots, x_n) \in \XX^n$, $y^n = (y_1, \cdots, y_n) \in \YY^n$. 

For DMC $W=\{W(y|x), \ x \in \XX, \ y \in \YY)\}$, let $\PP(\XX^n)$ denote the set of probability distributions on $\XX^n$, a randomized identification code $(n,M,\lambda_1, \lambda_2)$ is defined by a family 
$\{(Q(\cdot|i),\DD_i)|i=1, \cdots ,M\}$,
where $Q(\cdot|i) \in \PP(\XX^n)$ are randomized encoders, $\DD_i \subseteq \YY^n$ are decoding regions. The type-I and type-II error rates satisfy $\forall \ i,j=1,\cdots,M, i \neq j$
\begin{equation}
\sum_{x^n \in \XX^n } Q(x^n|i) W^n(\DD_i^{c}|x^n) \leq \lambda_1, 
\end{equation}
\begin{equation}
\sum_{x^n \in \XX^n } Q(x^n|i) W^n(\DD_j|x^n) \leq \lambda_2.
\end{equation}

The dominating difference between identification and transmission is that decoding regions $\DD_i$ do not have to be disjoint. Therefore, the number of supported messages is much higher.   
 
$R$ is a $(\lambda_1,\lambda_2)$ achievable rate, if $\forall \ \gamma > 0$ and $n$ is sufficiently large, there exist $(n,M,\lambda_1,\lambda_2)$ identification codes that satisfy
\begin{equation}
\frac{1}{n}\log \log M \geq R - \gamma.
\end{equation}
The supremum of achievable identification rates is called identification capacity $C_{I}(W)$ of the DMC $W$.

In \cite{b3}\cite{b4}, capacity-achieving codes were used to construct randomized identification codes over DMC. If $\lambda_1 + \lambda_2 < 1$, identification capacity was proved to be equal to transmission capacity, i.e. $C_{I}(W) = C(W),$
where $C(W)$ is channel capacity \cite{b2}. Similar results were proved for compound channels in \cite{b7a}. 

A direct result from the proof is that, $Q(\cdot|i)$ can be chosen uniformly over proper subsets of capacity-achieving transmission codes. 
In \cite{b8}\cite{b9}, two layers of Reed-Solomon codes were used to explicit construct capacity-achieving identification codes. However, the latest implementation still incurs high complexity due to the computation in large finite fields \cite{b9a}.

\subsection{Deterministic Identification}
Deterministic identification codes can be constructed by choosing $Q(\cdot|i)$ to be the point mass on codeword $c_i$. The type-I and type-II error rates are given by
\begin{align}
P_1(i)& = W^n(\DD_i^c|c_i), \ P_2(i,j) = W^n(\DD_j|c_i),
\end{align}
where $\DD_i \subseteq \YY^n$ is the decoding region for message-$i$, $P_1(i)$ is the type-I error rate of message-$i$  and $P_2(i,j)$  is the probability that message-$i$ is misidentified as message-$j$.

Though the number of messages scales exponentially in the blocklength, i.e., $M\approx 2^{nC_{DI}(W)}$, the achievable identification capacity $C_{DI}(W)$ can still be significantly higher than transmission capacity $C(W)$. This is mainly due to the relaxation that allows overlapping in the decoding regions, as shown in Fig. \ref{fig.2}. 

\begin{figure}[htbp]\centering
\subfigure[Transmission]{
\label{fig.1}
\includegraphics[width = .22\textwidth]{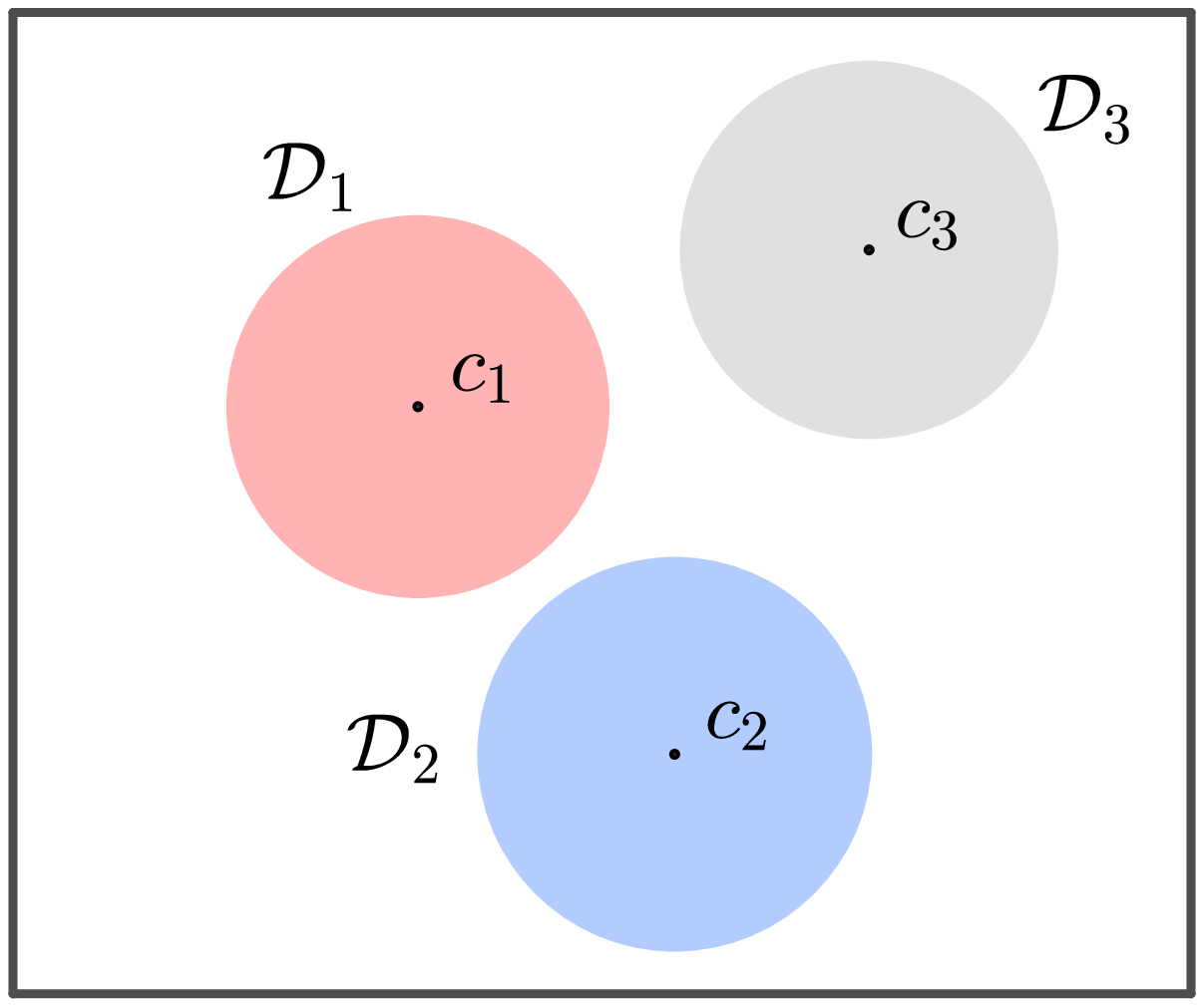}
}
\hfill
\subfigure[Identification]{
\label{fig.2}
\includegraphics[width = .22\textwidth]{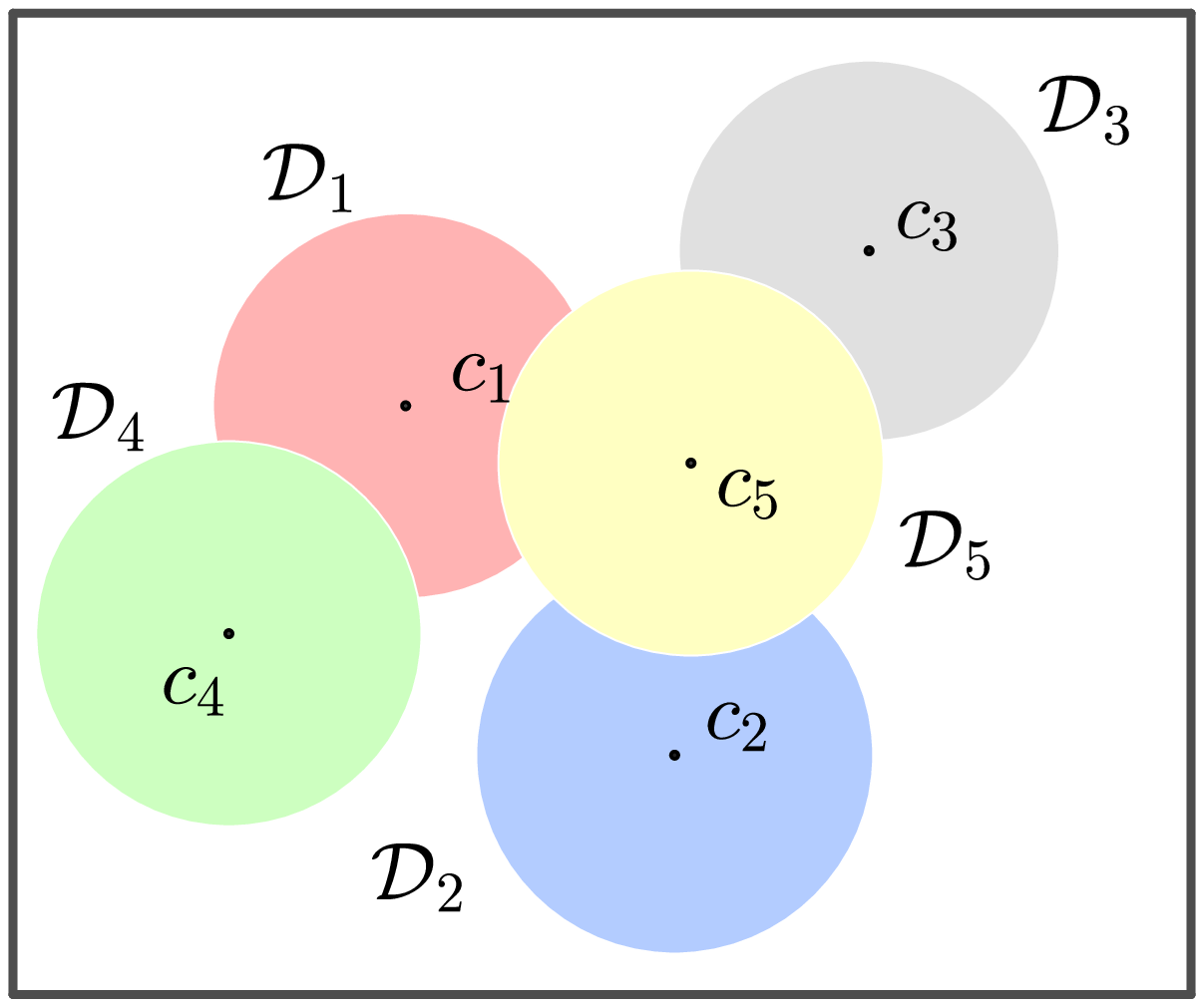}
}
\caption{No disjointness is imposed on $\DD_i$ in identification codes, thus the supported number of messages is much higher than that in transmission codes.}
\label{fig3}
\end{figure}  

\ \ \ \ In \cite{b11}, deterministic identification capacity $C_{DI}(W)$ of DMC $W$ without any constraints was proved to be
$C_{DI}(W) = \log n_{row}(W)$,
where $n_{row}$ is the number of distinct rows of $W$. For example, $C_{DI}(W)=1$ regardless of the cross probability $0 < p<\frac{1}{2}$ of binary symmetric channel, whose transmission capacity $C(W)=1-H(p)$, where $H(p) = -p\log p-(1-p)\log (1-p) > 0$.

To summarize, deterministic identification codes have simpler implementation and higher achievable identification rate than transmission codes. They may lead to higher energy and spectrum efficiencies in a wireless system.  

\subsection{Identification over Fading Channels \emph{with} CSI}
In \cite{b12}, deterministic identification was generalized to infinite input alphabet. In this scenario, deterministic identification capacity is infinite in the exponential scale regardless of channel noise. If fading coefficients are positive and bounded away from zero, the deterministic identification capacity of Gaussian channel \emph{with} CSI was proved to scale as $n \log n$ \cite{b12}, i.e., $M \approx 2^{(n\log n)R}$.    

\section{Identification over Channels \emph{without} CSI}
In this section, we extend deterministic identification to block fading Gaussian channels \emph{without} CSI. Firstly, we prove that deterministic identification capacity exceeds transmission capacity asymptotically even when CSI is unknown. Secondly, we analyze the finite-length identification performance. In an application example, we use blind identification codes in PAT schemes for both channel estimation and user identification. 
\subsection{Notations and Definitions}
In this paper, $\log x$ is the base-2 logarithm of $x$. $|\mathcal{A}|$ is the cardinality of set $\mathcal{A}$.

Let $|x| = \sqrt{\FR(x)^2+\FI(x)^2}$ be the norm of complex number $x = \FR(x)+j\FI(x)$, where $\FR(x),\FI(x)$ are the real and imaginary parts of $x$. Let $\arg x$ be the argument of $x$, and $\overline{x}=\FR(x)-j\FI(x)$ be the conjugate of $x$.

Let $\xx=(x_1,\cdots,x_n)^T$ be an $n$-dimensional complex vector, with the $l_2$-norm  
\begin{equation}
\label{eq1}
||\xx|| = \sqrt{\sum_{k=1}^n \FR(x_k)^2+ \FI(x_k)^2}.
\end{equation}
Denote by $\xx^H =(\overline{x}_1,\cdots,\overline{x}_n)$  the conjugate transpose of $\xx$. The notation $\langle \xx,\yy \rangle=\yy^H\xx$ represents the inner product of $\xx$ and $\yy$.

Denote the $n$-dimensional complex closed sphere of radius $r$ centered at $\xx_0$ by 
\begin{align} 
\BB_{\xx_0}(n,r) = \{\xx \in \mathbb{C}^n:||\xx-\xx_0|| \leq r\}.
\end{align} 
Correspondingly, 
\begin{align} 
\MS_{\xx_0}(n,r) = \{\xx \in \mathbb{C}^n:||\xx-\xx_0|| = r\}
\end{align}
denotes the surface of $\BB_{\xx_0}(n,r)$. In view of (\ref{eq1}), $\BB_{\xx_0}(n,r)$ and $\MS_{\xx_0}(n,r)$ can also be regarded as a $2n$-dimensional real sphere and its spherical surface. 
Let $S[\cdot]$ ($V[\cdot]$) be the area (volume) of subset of $\MS_{\xx_0}(n,r)$ $(\BB_{\xx_0}(n,r))$. We abbreviate $S[\MS_{\xx_0}(n,r)]$ $(V[\BB_{\xx_0}(n,r)])$ to $S(n,r)=\frac{2n\pi^nr^{2n-1}}{n!}$ ($V(n,r)=\frac{\pi^nr^{2n}}{n!}$).

Let $X_k \sim \mathcal{N}(a_k,1)$ be independent normal random variables, then $X = \sum\limits_{k=1}^n X_k^2 \sim \chi^2(n,\delta)$-the $\chi^2$-distribution with $n$ degrees of freedom and noncentrality parameter $\delta$, where $\delta = \sum\limits_{k=1}^{n}a_k^2$. When $\delta=0$, we abbreviate $\chi^2(n,0)$ to $\chi^2(n)$. Let $F^{-1}_{n,\delta}(\cdot)$ and $F^{-1}_{n}(\cdot)$ denote the inverse cumulative distribution functions of $\chi^2(n,\delta)$ and $\chi^2(n,0)$ respectively. 

The block fading Gaussian channel with power constraints is defined as 
\begin{equation}
\yy = h\xx + \BN,
\end{equation}
where $\xx$, $\yy$ are length-$n$ input-output complex symbols, $h \sim P_h$ is the fading coefficient of the whole block and is not known at both transmitter and receiver. $\BN \sim \CN(0,\sigma^2I_n)$ is a complex Gaussian noise vector and the transmission power is limited to $||\xx||^2\leq nP$. Denote $\text{SNR}=\frac{P}{\sigma^2}$, w.l.o.g., we assume $\sigma^2 = 1, \text{SNR} = P$.

A $\left(n, M, P,\lambda_1,\lambda_2 \right)$ deterministic identification code for block fading channel \emph{without} CSI is defined by a codebook $\{c_i\}_{i=1}^M$ and a collection of decoding regions $\{\DD_i\}_{i=1}^M$. Due to the absence of CSI, we can not construct decoding regions from $h$.  A $\left(n, M, P,\lambda_1,\lambda_2 \right)$ deterministic identification code should satisfy
\begin{align}
\max_i P(\DD_i^c|c_i) \leq \lambda_1, \ \max_{i\neq j} P(\DD_j|c_i) \leq \lambda_2,
\end{align}
and $||c_i||^2 \leq nP$. The average type-II error rate for receiver-$i$ is defined as
\begin{align}
\ola_2 = \frac{1}{M-1}\sum_{j\neq i}P(\DD_j|c_i),
\end{align} 
which characterizes the average false activation rate of all the non-target receivers when $i$ is the target receiver. $\lambda_2$ depends on minimum code distance, while $\ola_2$ depends on the code spectrum. For a specific non-target user, the maximal type-II error rate is important. However, from a practical point of view, the average type-II error rate is also important, because it indicates the percentage of users who are incorrectly activated at the same time.

Let $M^*(n,P,\lambda_1,\lambda_2) =$
\begin{align}
 \max \{ M:\exists \left(n, M, P,\lambda_1,\lambda_2 \right)\text{codes}\}
\end{align}
be the maximum achievable code size. For simplicity, we only consider codes with equal-power constraints, i.e., $\forall \ i, ||c_i||^2=nP$. 

We call inputs are transmitted by Quadrature phase-shift keying (QPSK) symbols, if $c_i \in \{ \sqrt{P}e^{j(\frac{\pi}{4}+\frac{\pi}{2}k)}\}^n$, $k=0,1,2,3$. Let $n_c^k$ denote the number of $e^{j(\frac{\pi}{4}+\frac{\pi}{2}k)}$ in length-$n$ QPSK sequence $c$.
\subsection{Asymptotic Analysis}
In this section, we analyze the asymptotic achievable rate of block fading channel \emph{without} prior knowledge of CSI. Since $h$ is unknown, we first estimate $h$ as $\hh$, and $\hh$ will be used to compute identification metric. The following lemma determines the distributions of $|\hh|$ and mean square error $(mse)$ for both target and non-target receivers.

\begin{Le}
\label{le1}
For receiver j, let $\hh_j = \frac{c_j^H \yy}{nP}$, $\hat{\yy}_j= \hh_j c_j$ and $ mse_j = (\yy-\hat{\yy}_j)^H(\yy-\hat{\yy}_j)$.\\
If $j$ is the target receiver, i.e., $\yy=hc_j+\BN$, then
\begin{align}
&2nP|\hh_j|^2 \sim \chi^2(2,2nP|h|^2), \label{eq71} \\
&2mse_j \sim \chi^2(2n-2), \label{eq72}
\end{align}
and if $i$ is the target receiver, i.e., $\yy =hc_i+\BN$, then
\begin{align}
&2nP|\hh_j|^2 \sim \chi^2(2,2nP|h|^2\frac{|\langle c_i, c_j\rangle|^2}{(nP)^2}), \label{eq81} \\
&2mse_j \sim \chi^2\left(2n-2,2nP|h|^2(1-\frac{|\langle c_i,c_j \rangle|^2}{(nP)^2})\right). \label{eq82}
\end{align}
Furthermore, $2nP|\hh_j|^2 $ and $mse_j$ are independent for any fixed $h$.
\end{Le}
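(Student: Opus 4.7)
The plan is to reduce everything to an orthogonal decomposition of $\yy$ along and transverse to the signal direction $c_j$, and then unpack the resulting complex Gaussians into the real components that produce the asserted (non)central chi-square laws. Introduce the rank-one projector $P_j = c_j c_j^H / (nP)$, which is orthogonal because $\|c_j\|^2 = nP$. Then by construction $\hh_j c_j = P_j \yy$ and $\yy - \hh_j c_j = (I - P_j)\yy$, so $|\hh_j|^2$ and $mse_j$ are functions of the two orthogonal components $P_j \yy$ and $(I - P_j)\yy$ respectively. Since Gaussian noise has a rotationally invariant density, these two components are statistically independent for any fixed $h$, giving the last claim of the lemma for free. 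This reduction is the backbone; everything else is computing means and variances in each component.

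For the target case $\yy = h c_j + \BN$, plugging in gives $\hh_j = h + c_j^H \BN/(nP)$. Because $\BN \sim \CN(0, I_n)$, the linear combination $c_j^H \BN/(nP)$ is circularly-symmetric complex Gaussian with variance $\|c_j\|^2 / (nP)^2 = 1/(nP)$, so $\hh_j \sim \CN(h, 1/(nP))$. Writing $\hh_j = \FR(\hh_j) + j\FI(\hh_j)$, the real and imaginary parts are independent $\mathcal{N}(\FR(h), 1/(2nP))$ and $\mathcal{N}(\FI(h), 1/(2nP))$; rescaling by $\sqrt{2nP}$ and applying the definition of the noncentral $\chi^2$ gives $2nP|\hh_j|^2 \sim \chi^2(2, 2nP|h|^2)$. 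For the $mse$, observe $(I-P_j)\yy = (I-P_j)\BN$ since $P_j c_j = c_j$. Pick an orthonormal basis of the $(n-1)$-dimensional complex orthogonal complement of $c_j$; the coefficients of $(I-P_j)\BN$ in this basis are i.i.d.\ $\CN(0,1)$, so $\|(I-P_j)\BN\|^2$ is a sum of $2(n-1)$ independent $\mathcal{N}(0,1/2)$ squared, yielding $2\, mse_j \sim \chi^2(2n-2)$.

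For the non-target case $\yy = h c_i + \BN$, the same computation gives $\hh_j = h \langle c_i, c_j\rangle / (nP) + c_j^H \BN/(nP)$, i.e. the same Gaussian noise term but a shifted complex mean of squared magnitude $|h|^2 |\langle c_i, c_j\rangle|^2 / (nP)^2$, which turns into the stated noncentrality after multiplying by $2nP$. For $mse_j$, write $(I - P_j)\yy = h(I - P_j) c_i + (I - P_j)\BN$; the noise part is exactly as in the target case, while the deterministic shift $h(I-P_j)c_i$ has squared norm $|h|^2(\|c_i\|^2 - |c_j^H c_i|^2/(nP)) = nP|h|^2\bigl(1 - |\langle c_i,c_j\rangle|^2/(nP)^2\bigr)$. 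Expanding in the same orthonormal basis of the orthogonal complement produces $2n-2$ independent real normals with total squared mean equal to twice this shift, giving the stated noncentral $\chi^2(2n-2, 2nP|h|^2(1 - |\langle c_i, c_j\rangle|^2/(nP)^2))$ law for $2\,mse_j$.

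The only point that needs care is the complex-to-real bookkeeping: the convention $\BN \sim \CN(0, I_n)$ means real and imaginary parts each have variance $1/2$, so every appearance of $|\cdot|^2$ must be multiplied by $2$ before comparing with a $\chi^2$ distribution, which is exactly why the factors of $2nP$ and $2$ appear in both the degrees of freedom and the noncentrality parameters. Once this scaling is fixed, the rest is linear algebra. Independence of $\hh_j$ and $mse_j$ for fixed $h$ follows because they are measurable functions of $P_j\yy$ and $(I-P_j)\yy$, which are independent Gaussians (the cross-covariance $P_j (I-P_j) = 0$), so no additional argument is needed.
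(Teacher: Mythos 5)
Your proof is correct and follows essentially the same route as the paper: both decompose $\yy$ via the rank-one projector $c_jc_j^H/(nP)$ and its orthogonal complement, use the unitary invariance of circularly symmetric complex Gaussian noise to identify the components as independent $\CN(0,1)$ variables, and read off the noncentrality parameters from the deterministic shifts (the paper just writes this as a unitary diagonalization of the idempotent quadratic-form matrices rather than as projections). Your complex-to-real scaling bookkeeping and the computation of $\|(I-P_j)c_i\|^2$ both match the paper's values exactly.
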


\begin{proof}
The proof is mainly based on generalizing the distribution of quadratic forms of real multivariate normal vectors \cite{bl} to those of complex normal vectors. 

Let $\yy = hc_i+\BN$, then
\begin{align}
nP|\hh_j|^2 &= \left(\BN+hc_i\right)^H \frac{c_jc_j^H}{nP}\left(\BN+hc_i\right), \\
mse_j &= \left(\BN+hc_i\right)^H\left(I_n-\frac{c_jc_j^H}{nP}\right)\left(\BN+hc_i\right).
\end{align}
Denote $A = \frac{c_jc_j^H}{nP}$, $B = I_n-\frac{c_jc_j^H}{nP}$, we have $A = A^2=A^H$, $B = B^2=B^H$ and \text{rank}$(B)= \ $\text{trace}$(B)=n-1$.
Thus $B$ can be unitary decomposed  into $B=D\Sigma D^H$, where $
\Sigma=\left(\begin{array}{ll}
I_{n-1} & \bm{0} \\
\bm{0}^T & 0
\end{array}\right)
$, $D = \left(\zz_1,\cdots,\zz_{n-1},\frac{c_j}{\sqrt{nP}}\right)$, and $\zz_i$ are unit orthogonal eigenvectors of eigenvalue $1$.
Therefore
\begin{align}
nP|\hh_j|^2 &= \left(\FR(N'_n+c'_n)\right)^2 + \left(\FI(N'_n+c'_n)\right)^2,\\
mse_j &= \left(\BN'+\cc'\right)^H \Sigma \left(\BN'+\cc'\right)\notag \\ 
& = \sum_{k=1}^{n-1}\left(\FR(N'_k+c'_k)\right)^2 + \left(\FI(N'_k+c'_k)\right)^2,
\end{align}
where $\BN' = D^H \BN \sim \CN(0,I_n)$\cite[Theorem 1]{b13}, $\cc'=hD^Hc_i$.
Consequently, $2nP|\hh_j|^2 \sim \chi^2(2,2nP|h|^2\frac{|\langle c_i, c_j\rangle|^2}{(nP)^2})$, $2mse_j \sim \chi^2(2n-2,2\delta)$, where
\begin{align}
\delta &= \sum_{k=1}^{n-1}|c'_k|^2 = ||\cc'||^2 - |c'_n|^2 \notag \\ 
&= |h|^2nP - |h|^2\frac{|\langle c_i,c_j\rangle|^2}{nP}.
\end{align}
Furthermore, $2nP|\hh_j|^2 $ and $mse_j$ are independent because $\{N'_k\}_{k=1}^{n-1}$ and $N'_n$ are independent.
\end{proof}

\begin{re}
\label{re1}
Let $X\sim \chi^2(n,\delta)$, we have $E(X)=n+\delta$, $var(X)=2n+4\delta$. By Chebyshev inequality, $X$ is concentrated at 
$\left[E(X)-t\sqrt{var(X)}, E(X)+t\sqrt{var(X)}\right]$. Therefore, let $ \DD_j=\{\yy: mse_j \leq T\}$ for some properly chosen threshold $T$, we can identify different receivers successfully provided that $\frac{|\langle c_i,c_j \rangle|^2}{(nP)^2}$ are not too close to $1$.  For simplicity of theoretical analysis, we use only $mse$ to construct decoding regions in asymptotic analysis, and $\hh$ will be utilized to boost finite-length performance.   
\end{re}

By exploring the maximum number of codewords with not-too-large pairwise inner products, we analyze the achievable rates for both discrete and continuous input symbols. The results are quite promising. Asymtotically, deterministic idenification capacity exceeds channel capacity even in the absence of CSI. 
\begin{theo}
\label{theo1}
Assume $\forall \ \epsilon > 0$, $\exists \ 0 < \uh < \oh$, such that $P\left(\uh \leq |h|^2 \leq \oh \right)>1-\epsilon$.\\
If inputs are transmitted by QPSK symbols, we have
\begin{equation}
\label{eq4}
\lim\limits_{n \rightarrow \infty}\frac{\log\SM}{n} = 2,
\end{equation} 
if inputs are transmitted by continuous complex symbols, we have
\begin{equation}
\label{eq5}
\liminf\limits_{n \rightarrow \infty}\frac{\log\SM}{n\log n} \geq \frac{1}{2}. 
\end{equation}
\end{theo}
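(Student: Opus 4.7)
The plan is to implement the $mse$-based decoder of Remark~\ref{re1}, setting $\DD_j=\{\yy:mse_j\le T_n\}$ with $T_n$ chosen so that, under the high-probability fading event $\{\uh\le|h|^2\le\oh\}$ whose complement is absorbed into $\lambda_1$ by taking $\epsilon$ small, the type-I error is at most $\lambda_1$. This is possible because by Lemma~\ref{le1} the target-receiver law $2mse_j\sim\chi^2(2n-2)$ does not depend on $h$. For a non-target codeword $c_i$, Lemma~\ref{le1} gives $2mse_j\sim\chi^2(2n-2,2\delta_{ij})$ with
\begin{equation*}
\delta_{ij}=|h|^2\Bigl(nP-\frac{|\langle c_i,c_j\rangle|^2}{nP}\Bigr).
\end{equation*}
The mean gap $2\delta_{ij}$ together with the Chebyshev bound of Remark~\ref{re1} (variance $\Theta(n+\delta_{ij})$) yields $P(\DD_j|c_i)\le\lambda_2$ as soon as $\delta_{ij}\ge\kappa_n\sqrt{n}$ for some $\kappa_n\to\infty$ slowly. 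Both parts of Theorem~\ref{theo1} therefore reduce to producing power-$nP$ codebooks whose pairwise cross-correlations satisfy
\begin{equation*}
\frac{|\langle c_i,c_j\rangle|^2}{(nP)^2}\le 1-\frac{\kappa_n}{\uh P\sqrt{n}},\qquad\forall\,i\ne j.
\end{equation*}

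For the QPSK bound the plan is a random-coding plus expurgation argument. Every QPSK word has $\|c\|^2=nP$, and for independently uniform QPSK $c_i,c_j$ one has $\langle c_i,c_j\rangle/P=\sum_{k=1}^nW_k$ with i.i.d.\ $\{1,-1,j,-j\}$-valued $W_k$. A complex Hoeffding/Chernoff estimate gives $\Pr\bigl[|\langle c_i,c_j\rangle|^2>(1-\eta)(nP)^2\bigr]\le 4e^{-nI(\eta)}$ with rate $I(\eta)\to\log 4$ as $\eta\downarrow 0$. Drawing $M=4^{n(1-o(1))}$ independent uniform QPSK codewords and union-bounding over the $\binom{M}{2}$ pairs, then deleting one codeword from each pair violating the cross-correlation condition, leaves at least $M/2$ admissible codewords, so $\tfrac{1}{n}\log\SM\to 2$; the converse is the trivial $\SM\le 4^n$.

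For the continuous-input bound the plan is a sphere-packing argument on the complex power sphere $\MS_0(n,\sqrt{nP})$. Modulo a global phase, the cross-correlation condition translates to a packing in $\mathbb{CP}^{n-1}$ at Fubini--Study distance at least $\arcsin\sqrt{\kappa_n/(\uh P\sqrt n)}$. Each such Fubini--Study ball has normalized volume of order $\bigl(\kappa_n/(\uh P\sqrt n)\bigr)^{n-1}$, so a greedy packing admits at least $\bigl(\uh P\sqrt n/\kappa_n\bigr)^{n-1}$ codewords, giving $\log\SM\ge\tfrac{n-1}{2}\log n-O(n)$ and hence the claimed $\liminf\ge\tfrac12$.

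The step I expect to be the main obstacle is making the non-central $\chi^2$ tail estimate quantitative enough that only the $1/\sqrt n$ margin in the cross-correlation bound is needed, and that the implicit constants are uniform over $h\in[\uh,\oh]$; a naive application of Remark~\ref{re1} should suffice but the dependence on $|h|$ must be carefully bookkept so that no more slack than $n^{-1/2}$ is spent. The second delicate ingredient is the small-ball volume estimate on $\mathbb{CP}^{n-1}$ to the order $n^{-(n-1)/2}$, which follows from the standard complex Jacobian computation but has to be tight enough to yield exactly the exponent $\tfrac12$.
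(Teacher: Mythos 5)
Your proposal is correct and follows the paper's overall strategy in its first stage --- the $mse$ decoder of Remark~\ref{re1}, Chebyshev bounds on the central and non-central $\chi^2$ laws from Lemma~\ref{le1}, and the reduction to constructing power-$nP$ codebooks whose pairwise normalized squared correlations stay at least $\Theta(n^{-1/2})$ (up to slowly growing factors) below $1$ --- but it replaces both code-existence arguments with genuinely different ones. For QPSK, the paper runs a Gilbert--Varshamov greedy packing: it counts explicitly the number $N(\rho)\leq 2^{n(6\er+H(3\er))+2}$ of sequences correlated above $nP\sqrt{1-\rho}$ with a fixed word (by showing such a word must have one constellation symbol appearing more than $n(1-3\er)$ times) and concludes $\SM\geq 4^n/N(\rho)$. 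Your random-coding-plus-expurgation route needs exactly the same quantitative fact in probabilistic form, and this is the one step where you should be careful: a naive complex Hoeffding/sub-Gaussian bound gives an exponent bounded away from $\log 4$, so to get $I(\eta)\to\log 4$ you must either optimize the Chernoff parameter and exploit the structural identity $|\FR\langle c,c'\rangle|+|\FI\langle c,c'\rangle|\leq nP$ (which forces one quadrature component to be within $\eta$ of its maximum before $|\langle c,c'\rangle|^2$ can approach $(nP)^2$), or fall back on the paper's type-counting bound; once that exponent is established the expurgation closes and the two routes are equivalent in strength. For the continuous case, the paper packs phase-tubes $\bigcup_\theta\BB_{e^{j\theta}c}(n,r_n)$ on the power sphere and bounds their surface area by $2\pi CnP\,V(n-1,2r_n)$, whereas you quotient by the global phase and pack in $\mathbb{CP}^{n-1}$; your route is arguably cleaner because the normalized measure of the bad cap $\{c':|\langle c,c'\rangle|^2/(nP)^2>1-\epsilon_n\}$ is exactly $\epsilon_n^{\,n-1}$ (the $\mathrm{Beta}(1,n-1)$ law of the squared correlation of a uniform point on the sphere), so a maximal-packing/covering argument gives $\log\SM\geq(n-1)\log(1/\epsilon_n)$ with no surface-area estimates. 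Two bookkeeping corrections: the tail event $|h|^2\notin[\uh,\oh]$ must be charged to the type-II error (the type-I error is $h$-independent since the target's $2mse$ is central $\chi^2(2n-2)$), not to $\lambda_1$ as you wrote; and if you pack with metric Fubini--Study balls rather than correlation caps you pay a radius-doubling factor, which costs only $O(n)$ bits and is absorbed in the $o(n\log n)$ error term.
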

\begin{proof}
Let $Z_1 \sim \chi^2(2n-2)$, $Z_2^h \sim \chi^2(2n-2,2|h|^2nP(1-\frac{|\langle c_i,c_j \rangle|^2}{(nP)^2})$ be the $mse$ distributions of target and non-target receivers. If $\exists \ 0 < b \leq 1, \ K >0$, such that
\begin{equation}
\label{eq3} 
1-\frac{|\langle c_i,c_j \rangle|^2}{(nP)^2} \geq \frac{K}{n^{\frac{1}{2}(1-b)}}, \ \forall \  1 \leq i\neq j \leq M.
\end{equation}
By Chebyshev inequality, if $\uh \leq |h|^2 \leq \oh$, we have
\begin{align}
&P[Z_1 > 2n+t_1\sqrt{4n}] \leq \frac{1}{t_1^2},  \label{eq2a}\\
&P[Z_2^h \leq 2n+t_1\sqrt{4n}] \leq P[Z_2^h \leq 2n-2+ \notag\\ 
&2\uh PKn^{\frac{1}{2}(1+b)}- t_2\sqrt{(4+8\oh P)n}]  \label{eq2} \leq \frac{1}{t_2^2},
\end{align} 
where (\ref{eq2}) is satisfied for sufficiently large $n$. To bound the type-I and type-II error rates below $\lambda_1$ and $\lambda_2$, let $\epsilon = \frac{\lambda_2}{2}$, $t_1 = \sqrt{\frac{1}{\lambda_1}}$, $t_2 = \sqrt{\frac{2}{\lambda_2}}$, $T = 2n + t_1\sqrt{4n}$. Construct decoding regions as $\DD_i = \{2mse \leq T\}$, we have
\begin{align}
\label{eq10}
P(\DD_i^c|c_i)& = P(Z_1 > T) \leq \lambda_1, \notag \\  
P(\DD_j|c_i) &\leq P(Z_2^h \leq T, \ \uh \leq |h|^2 \leq \oh) +  P(|h|^2 \notin [\uh,\oh])\notag\\
&  \leq \lambda_2.
\end{align}
According to (\ref{eq10}), $\{c_i\}_{i=1}^M$ form an $\left(n,M,P,\lambda_1, \lambda_2\right)$ code if (\ref{eq3}) is satisfied for some $K$ and $b$. Next, we analyze the maximum number of codewords satisfying (\ref{eq3}) for some $K$ and $b$ specified in the following proofs. 

To prove (\ref{eq4}), we denote the ball $\mathcal{A}(c,\rho)=\{ c': |\langle c,c' \rangle| > nP\sqrt{1-\rho} \}$, $0 < \rho < 1$ as a set of sequences $c'$ that are less orthogonal to $c$. If $c_i \notin \mathcal{A}(c_j,\rho)$, $\forall i \neq j$, then (\ref{eq3}) is satisfied with $K=\rho$, $b = 1$.

 Firstly, we prove $|\mathcal{A}(c,\rho)|$ are the same for different $c$. Let \\$c = (\sqrt{P}e^{j(\frac{\pi}{4}+\frac{\pi}{2}c_1)},\cdots,\sqrt{P}e^{j(\frac{\pi}{4}+\frac{\pi}{2}c_n)})^T$ and $a = (\sqrt{P}e^{j(\frac{\pi}{4}+\frac{\pi}{2}a_1)},\cdots,\sqrt{P}e^{j(\frac{\pi}{4}+\frac{\pi}{2}a_n)})^T$ be arbitrary QPSK sequences.  
Let $c' = (\sqrt{P}e^{j(\frac{\pi}{4}+\frac{\pi}{2}c'_1)},\cdots,\sqrt{P}e^{j(\frac{\pi}{4}+\frac{\pi}{2}c'_n)})^T$, $a' = (\sqrt{P}e^{j(\frac{\pi}{4}+\frac{\pi}{2}a'_1)},\cdots,\sqrt{P}e^{j(\frac{\pi}{4}+\frac{\pi}{2}a'_n)})^T$, where $a'_i=a_i+c'_i-c_i$, we have 
\begin{align}
\langle c,c' \rangle = \sum_{i=1}^n Pe^{j\frac{\pi}{2}(c_i-c'_i)} = \langle a,a' \rangle,
\end{align}
thus $c' \in \mathcal{A}(c,\rho)$ if and only if $a' \in \mathcal{A}(a,\rho)$. In other words, we construct a bijection between $\mathcal{A}(c,\rho)$ and $\mathcal{A}(a,\rho)$, therefore, $|\mathcal{A}(c,\rho)|= |\mathcal{A}(a,\rho)|$. 

Denote by $N(\rho) = |\mathcal{A}(c,\rho)|$ the number of QPSK sequences in the ball $\mathcal{A}(c,\rho)$, and w.l.o.g., assume $c = (e^{j\frac{\pi}{4}},\cdots,e^{j\frac{\pi}{4}})^T$, we are going to provide an upper bound on $N(\rho)$. If $n_{c'}^0 \geq n\epsilon(\rho)$ and $n_{c'}^1 \geq n\epsilon(\rho)$, where $\er = \frac{1-\sqrt{1-\rho}}{2-\sqrt{2}}$, we have
\begin{align}
\label{eq6}
|\langle c,c' \rangle| &\leq  \sqrt{2}nP\er + nP(1-2\er) \notag \\
&= nP(1-(2-\sqrt{2})\er) = nP\sqrt{1-\rho}.
\end{align}
Similarly, (\ref{eq6}) is satisfied if any two QPSK symbol numbers exceed $n\er$. Therefore, $c' \in \mathcal{A}(c,\rho)$ only if there is a QPSK symbol in overwhelming number, i.e.,
\begin{align}
\mathcal{A}(c,\rho) \subset \{c': \exists \ k, n_{c'}^k > n(1-3\er)\}.
\end{align}
Thus
\begin{align}
\label{eqnc}
N_c(\rho) &\leq |\{c': \exists \ k, n_{c'}^k > n(1-3\er)\}| \notag \\
& \leq 4\times C_n^{n(1-3\er)} 4^{3n\er} \\
&\leq 4 \times 2^{nH(3\er)}\times 4^{3n\er} = 2^{n\left(6\er+H(3\er)\right)+2}. \notag
\end{align}
Following the proof of Gilbert-Varshamov (GV) bound \cite{b14}, if $\exists$ $c_{k+1}\notin \bigcup_{i=1}^k \mathcal{A}(c_i,\rho)$, we can add $c_{k+1}$ to $\{c_i\}_{i=1}^k$ to form larger codes satisfying (\ref{eq3}). This packing process continues until the balls $\mathcal{A}(c_k,\rho)$ cover all the QPSK sequences. Thus $\SM \geq \frac{4^n}{N_c(\rho)}$,
\begin{align}
\frac{\log\SM}{n} \geq 2-6\er-H(3\er)-\frac{2}{n}.
\end{align}
Notice $\er \rightarrow 0$ when $\rho \rightarrow 0$, thus
\begin{align}
\liminf\limits_{n \rightarrow \infty} \frac{\log\SM}{n} \geq 2,
\end{align}
$\limsup\limits_{n \rightarrow \infty} \frac{\log\SM}{n} \leq 2$ is satisfied trivially, thus (\ref{eq4}) is proved.

To prove (\ref{eq5}), firstly, we transform the problem of analyzing the maximum number of codewords satisfying (\ref{eq3}) into analyzing the maximum number of non-overlapping ball packing. By Cauchy-Schwarz inequality, $\frac{|\langle c_i,c_j \rangle|^2}{(nP)^2}=1$ if and only if $c_j=e^{j\theta}c_i$ for some $\theta \in [0,2\pi)$. Therefore, (\ref{eq3}) is satisfied if the Euclidean distance between $c_j$ and $e^{j\theta}c_i$ is large enough $\forall$ $\theta \in [0,2\pi)$. Therefore, if
\begin{align}
||c_j-e^{j\theta}c_i|| \geq  \sqrt{2Pn^{\frac{b+1}{2}}} \triangleq r_n , \forall \  \theta \in [0,2\pi),
\end{align}
let $\theta=-\arg\langle c_i,c_j \rangle$, we have
\begin{align}
\frac{|\langle c_i,c_j\rangle|^2}{(nP)^2} &= \left(\frac{2nP-||c_j-e^{-j\arg\langle c_i,c_j\rangle }c_i||^2}{2nP}\right)^2 \notag \\
& \leq (1-\frac{1}{n^{\frac{1-b}{2}}})^2 \leq 1-\frac{1}{n^{\frac{1-b}{2}}},
\end{align}
so (\ref{eq3}) is satisfied with $K=1$. Denote 
\begin{align}
\mathcal{U}(c,r_n)= \{c' \in \MS_0(n,\sqrt{nP}): \bigcup_{\theta}\BB_{e^{j\theta}c}(n,r_n)\}, 
\end{align}
where $\mathcal{U}(c,r_n)$ is the intersection of the spherical surface $\MS_0(n,\sqrt{nP})$ and the union of the spheres $\bigcup\limits_{\theta}\BB_{e^{j\theta}c}(n,r_n)$.  
Let $M$ be the maximum number of non-overlapping $\mathcal{U}(c,r_n)$, and $\{\mathcal{U}(c_i,r_n)\}_{i=1}^M$ is a saturate packing arrangement, we have $\{c_i\}_{i=1}^M$ satisfy (\ref{eq3}), thus 
\begin{align}
\SM \geq M.
\end{align} 
Following the proof in \cite[Theorem 1]{b12}, we prove (\ref{eq5}) through analyzing the maximum number of non-overlapping ball packing $\{\mathcal{U}(c_i,r_n)\}_{i=1}^M$. If $c' \in \MS_0(n,\sqrt{nP})$ satisfies
\begin{align}
||c'-e^{j\theta}c_i|| \geq 2r_n, \ \forall \theta \in [0,2\pi)
\end{align}
 then $\{\mathcal{U}(c_i,r_n)\}_{i=1}^M \bigcup \mathcal{U}(c',r_n)$ is a larger non-overlapping arrangement, which is a contradiction. Thus 
\begin{align}
\MS_0(n,\sqrt{nP}) \subset \bigcup_i \mathcal{U}(c_i,2r_n),
\end{align}
and 
\begin{align}
M\times S[\mathcal{U}(c,2r_n)] \geq S(n,\sqrt{nP}).
\end{align}
For sufficiently large $n$, we have 
\begin{align}
S[\mathcal{U}(c,2r_n)] \leq 2\pi CnPV(n-1,2r_n),
\end{align}
where $C$ is a constant, so
\begin{align}
\SM &\geq \frac{S(n,\sqrt{nP})}{2\pi CnPV(n-1,2r_n)}\notag \\
&=\frac{1}{C}(\frac{nP}{4r_n^2})^{n-1}= \frac{1}{C}(\frac{n^{\frac{1-b}{2}}}{8})^{n-1}.
\end{align}
Thus 
\begin{align}
\liminf\limits_{n \rightarrow \infty}\frac{\log\SM}{n\log n} \geq \frac{1-b}{2}, \forall \ 0 < b \leq 1,
\end{align}
so (\ref{eq5}) is proved.
\end{proof}

\begin{re}
The proof for (\ref{eq4}) can be generalized to higher order modulation with finite constellation number.
Because an $n$-dimensional complex sphere is equivalent to a $2n$-dimensional real sphere, the achievable rate in (\ref{eq5}) is $\frac{1}{2}$ compared to  $\frac{1}{4}$ in \cite{b12}. The above theorem reveals that achievable identification rate \emph{without} CSI is equal to that \emph{with} CSI. This result is reasonable due to the following. As the blocklength goes to infinity, channel estimation error becomes negligible, so the identification capacity loss due to inaccurate channel estimation becomes negligible too.
\end{re}

\subsection{Finite Blocklength Analysis}
\label{sub1}
In practice, fading coefficient $h$ varies within a long blocklength, thus the finite-length analysis is critical for practical applications. For identification, the error analysis is based on characterizing the inner product, which is different from Polyanskiy's finite-length analysis for transmission codes \cite{b16}. Actually, the analysis for deterministic identification is simpler.

Missed detection usually causes a much severer damage to a communication system than a false activation. In this regard, we choose to bound the type-I error rate below $\lambda_1$ and minimize type-II error rate. According to (\ref{eq72}) and Theorem \ref{theo1}, we can identify receivers using the $mse$ metric, the decoding region of receiver-$j$ with the $mse$ metric is defined as 
$$\DD_j^{mse} = \{ mse_j \leq F^{-1}_{2n-2}(1-\lambda_1)/2\},$$
thus the type-I error rate is equal to $\lambda_1$ regardless of $h$.

In this section, we focus on analyzing the type-II error rates of QPSK identification codes. Let $N_d = |\{ c': \frac{|\langle c,c' \rangle|^2}{P^2} \geq d\}|$ be the number of QPSK sequences satisfying $\frac{|\langle c,c' \rangle|^2}{P^2} \geq d$. W.l.o.g., we assume $c = (e^{j\frac{\pi}{4}},\cdots,e^{j\frac{\pi}{4}})^T$, thus
\begin{align}
N_d &=  |\{c': \left(n_{c'}^0-n_{c'}^2\right)^2+\left(n_{c'}^1-n_{c'}^3\right)^2 \geq d)\}| \notag \\
&= \sum_{\left(n_{c'}^0-n_{c'}^2\right)^2+\left(n_{c'}^1-n_{c'}^3\right)^2 \geq d} \frac{n!}{n_{c'}^0!n_{c'}^1!n_{c'}^2!n_{c'}^3!}.
\end{align}
Denote by $\lambda_2(d)$ the type-II error rate between $c'$ and $c$ when $\frac{|\langle c, c'\rangle|^2}{P^2}=d$. Following the proof of GV bound in \cite[Theorem 5.1.7]{b14}, we obtain a type-II error rate upper bound for optimal  QPSK identification codes. 

\begin{theo}
Let $d_{min} = \min\{d: N_{d+1} < \frac{4^n}{M-1}\}$, then $(n,M,P,\lambda_1,\lambda_2^{GV}= \lambda_2(d_{min}))$ QPSK identification codes exist.
\end{theo}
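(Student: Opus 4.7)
The plan is to adapt the Gilbert--Varshamov greedy packing argument, now applied to QPSK sequences with the ``distance'' given by the squared normalized inner product $|\langle c,c'\rangle|^2/P^2$. First I would observe that for QPSK inputs this quantity is always a nonnegative integer, since $\langle c,c'\rangle = P\sum_i e^{j\pi(c_i-c'_i)/2}$ is $P$ times a Gaussian integer whose real and imaginary parts are $n_{c'}^0-n_{c'}^2$ and $n_{c'}^1-n_{c'}^3$ (after shifting $c$ to the all-$e^{j\pi/4}$ sequence). Hence the forbidden region around $c$ at level $d$ is exactly
\begin{equation}
\mathcal{B}(c,d)=\{c': |\langle c,c'\rangle|^2/P^2\ge d\},
\end{equation}
and the same bijection trick used in Theorem~\ref{theo1} (replace $c\to a$ componentwise and carry the offset onto $c'$) shows $|\mathcal{B}(c,d)|=N_d$ is independent of the center~$c$.

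Next I would run the standard greedy packing. Pick $c_1$ arbitrarily, and assume codewords $c_1,\dots,c_k$ with $k\le M-1$ have already been chosen with pairwise squared normalized inner products $\le d_{min}$. The set of QPSK sequences blocked from being the next codeword is $\bigcup_{i=1}^k \mathcal{B}(c_i,d_{min}+1)$, whose cardinality is bounded by
\begin{equation}
k\,N_{d_{min}+1}\le (M-1)\,N_{d_{min}+1}<4^n,
\end{equation}
by the definition of $d_{min}$. Therefore a valid $c_{k+1}\in\{\sqrt{P}e^{j(\pi/4+k\pi/2)}\}^n$ exists, and the process continues until $M$ codewords have been produced.

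Finally I would convert the inner-product bound into the claimed type-II error bound. From Lemma~\ref{le1}, the noncentrality parameter of $2mse_j$ under the non-target hypothesis is $2nP|h|^2\bigl(1-|\langle c_i,c_j\rangle|^2/(nP)^2\bigr)$, which is monotonically decreasing in $|\langle c_i,c_j\rangle|^2$; since the decoding region $\DD_j^{mse}$ is the fixed threshold set $\{mse_j\le F^{-1}_{2n-2}(1-\lambda_1)/2\}$, the induced type-II error rate $\lambda_2(d)$ is nondecreasing in $d$. Hence every pair $(c_i,c_j)$ with squared normalized inner product at most $d_{min}$ satisfies $P(\DD_j^{mse}|c_i)\le \lambda_2(d_{min})$, which is exactly $\lambda_2^{GV}$. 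The type-I bound $\lambda_1$ is automatic from the choice of threshold, completing the construction.

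The only mildly delicate point I expect is verifying cleanly that $\lambda_2(d)$ is indeed monotone in $d$ after averaging over the fading distribution $P_h$; this follows because the noncentral $\chi^2$ CDF at a fixed threshold is decreasing in the noncentrality parameter pointwise in $h$, so monotonicity is preserved under the expectation over $h$. Everything else is an immediate transcription of the GV counting argument of \cite[Theorem 5.1.7]{b14} to the QPSK inner-product setting.
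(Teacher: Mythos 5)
Your proposal is correct and follows essentially the same Gilbert--Varshamov greedy packing argument as the paper: pick codewords one at a time, and since the $k\le M-1$ already-chosen codewords block at most $(M-1)N_{d_{min}+1}<4^n$ sequences, a new codeword with pairwise squared normalized inner product at most $d_{min}$ always exists. The extra details you supply --- that $N_d$ is center-independent via the componentwise phase-shift bijection, and that $\lambda_2(d)$ is nondecreasing in $d$ because the noncentrality parameter of $2mse_j$ in Lemma~\ref{le1} decreases in $|\langle c_i,c_j\rangle|^2$ and the noncentral $\chi^2$ CDF at a fixed threshold decreases in the noncentrality --- are left implicit in the paper but are correct and tighten the argument.
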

\begin{proof}
We construct an $(n,M,P,\lambda_1,\lambda_2^{GV})$ identification code $\CC_M$ by adding new QPSK sequences sequentially. Firstly, let $\CC_1 = \{c_1\}$, $c_1$ can be any QPSK sequence. Assume we have $\CC_k = \{c_1,\cdots,c_{k}\}$, $1 \leq k \leq M-1$ satisfying $\frac{|\langle c_i,c_j \rangle|^2}{P^2} \leq d_{min}$, $\forall 1 \leq i\neq j \leq k$. Because $kN_{d_{min}+1}<4^n$, there is a QPSK sequence $c_{k+1}$ with $\frac{|\langle c_{k+1},c_j \rangle|^2}{P^2} \leq d_{min}$ to all codewords of $\CC_k$. Let $\CC_{k+1} = \{c_1,\cdots,c_{k+1}\}$, thus $\frac{|\langle c_i,c_j \rangle|^2}{P^2} \leq d_{min}$, $\forall 1 \leq i\neq j \leq k+1$. Finally, we obtain $\CC_M = \{c_1,\cdots,c_M\}$ satisfying $\frac{|\langle c_i,c_j \rangle|^2}{P^2} \leq d_{min}$, $\forall 1 \leq i\neq j \leq M$ and $\CC_M$ is an $(n,M,P,\lambda_1,\lambda_2^{GV})$ QPSK identification code.  
\end{proof}
Furthermore, we use the average weight spectrum of random codes to approximate $\ola_2$ as $\lambda^{approx}_2 = \sum_{d=0}^{n^2} P_d \lambda_2(d)$, where $P_d = \frac{N_d}{4^n}$ is the probability that random QPSK sequences $c$, $c'$ satisfy  $\frac{|\langle c, c'\rangle|^2}{P^2}=d$. Simulation results of random QPSK identification codes in Rayleigh fading channels are in Fig. \ref{fig.3}. As shown in Fig. \ref{fig.3}, $\lambda_2^{GV}$ and $\lambda^{approx}_2$ are accurate type-II error rate estimations for random QPSK identification codes in finite-length regime. The average type-II error rate $\lambda_2^{ave}$ does not increase with the number of users, i.e., the average type-II error rate is almost constant as the coding rate increases but cannot be arbitrary small at finite-length. Therefore, the average error rate is mainly affected by the blocklength rather than the coding rate. It is consistent with the asymptotic analysis \cite{b4}, where the coding rate of the identification codes can be infinite, and the achievable type-II error rate tends to zero as the code length increases.      

\begin{figure}[htbp]
\centerline{\includegraphics[width = .5\textwidth]{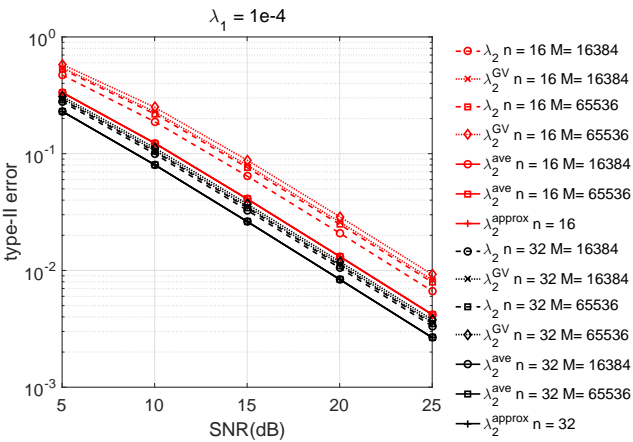}}
\caption{Type-II error rates of random codes using the $mse$ metric.}
\label{fig.3}
\end{figure}

\subsection{Blind Identification in Pilot-Assisted Transmission}
In the following application example, we consider blind identification codewords $\{c_i\}_{i=1}^{M}$ applied as pilot symbols in PAT schemes. The transmitter encodes a message into a codeword $\xx_{m}$ by the channel code $\mathcal{C}(N,K)$, and the overall sequence  $\xx=(c_i,\xx_{m})$ is transmitted over block fading channel. Denote by $\yy = (\yy_{id},\yy_{m})$ the received sequence, each active receiver will use $\yy_{id}$ to estimate $h$ and  identify if it is the target receiver. A type-I error occurs when the target receiver discards a message that could have been decoded correctly, so
\begin{align}
p_1 \leq \int_0^{\infty} P(\yy \in \DD_i^c||h|=x)(1-\text{bler}(x))p(x)dx,
\end{align} 
where $p_1$ is the overall type-I error rate, $p(x)$ is the distribution density of $|h|$, and $\text{bler}(x)$ is the block error rate of $\mathcal{C}(N, K)$ at $\text{SNR}=Px^2$ with perfect channel estimation. 

According to (\ref{eq72}) and (\ref{eq82}), if $|h|$ is small, the $mse$ distributions between target and non-target receivers are indistinguishable, thus the blind identification performance would be inferior with small $|h|$. However, the overall type-I error rate is small due to the high BLER. We can further reduce type-II error rate by allowing receivers to use both $|\hh_i|$ and $mse_i$, and we call it the $two\text{-}look$ metric. The decoding region of receiver-$j$ with the $two\text{-}look$ metric is defined as  
$$\DD_j^{two\text{-}look} = \{ mse_j \leq T \ \text{and} \ |\hh_j| \geq \oh \}$$ 
for properly chosen $T$ and $\oh$ specified in section \ref{simu}. By Lemma \ref{le1}, the type-II error rate of UE-$j$ with the \emph{two-look} metric is 
\begin{align*}
p_2 = \int_0^{\infty} P(mse_j \leq T||h|=x)P(|\hh_j| \geq \oh ||h|=x)p(x)dx.
\end{align*}
The key point in the construction of $\DD_i$ is that $\forall$ $h$, $P(|\hh_j| \geq \oh)$ and $P(mse_j \leq T)$ can not be large at the same time, thus the type-II error rate will be small regardless of $h$. Constructing  $\DD_i$ only on $P(|\hh_j| \geq \oh)$ $(P(mse_j \leq T))$ will result in high type-II error rate when $|h|$ is large (small).

\section{Simulation Results}
\label{simu}
In this section, we provide simulation results to verify the efficiency of the proposed blind deterministic identification method in PAT. Assume $h \sim \mathcal{CN}(0,1)$, so $|h| $ follows Rayleigh distribution. We use (128, 64) polar codes\cite{b15} as transmission codes, with CRC-Aided (CA) successive cancellation list (SCL) decoding \cite{b18}\cite{b19}. Identification codes are randomly generated as in Section \ref{sub1}. Let $T=F^{-1}_{2n-2}(1-\frac{\lambda_1}{2})/2$, $p_1$ is bounded by
\begin{align}
\label{eq9}
p_1 &\leq p(\uh)(1-\text{MC}(\uh)) + \int_{\uh}^{\infty}P(|\hh| \leq \oh||h|=x)p(x)dx \notag \\
& + P(mse > T),
\end{align}
where MC is the meta converse bound \cite{b17}, $\oh$ is chosen to bound (\ref{eq9}) below $\lambda_1$, and $\uh$ is optimized to maximize $\oh$. Denote by $\lambda_1^{true}$ the overall type-I error rate, we have $\lambda_1^{true} \leq \lambda_1$ according to (\ref{eq9}), and the value of $\lambda_1^{true}$ can be obtained from simulation results in Fig. \ref{fig.5}. As shown in Fig. \ref{fig.4}, the increase of miss detection probability is negligible. And the average typer-II error rate with the $two$-$look$ metric is two orders of magnitude lower than the type-II error rate with the $mse$ metric as shown in Fig. \ref{fig.7}.  

\begin{figure}[htbp]\centering
\subfigure[BLER $+$ $\lambda_1^{true}$]{
\label{fig.4}
\includegraphics[width = .22\textwidth]{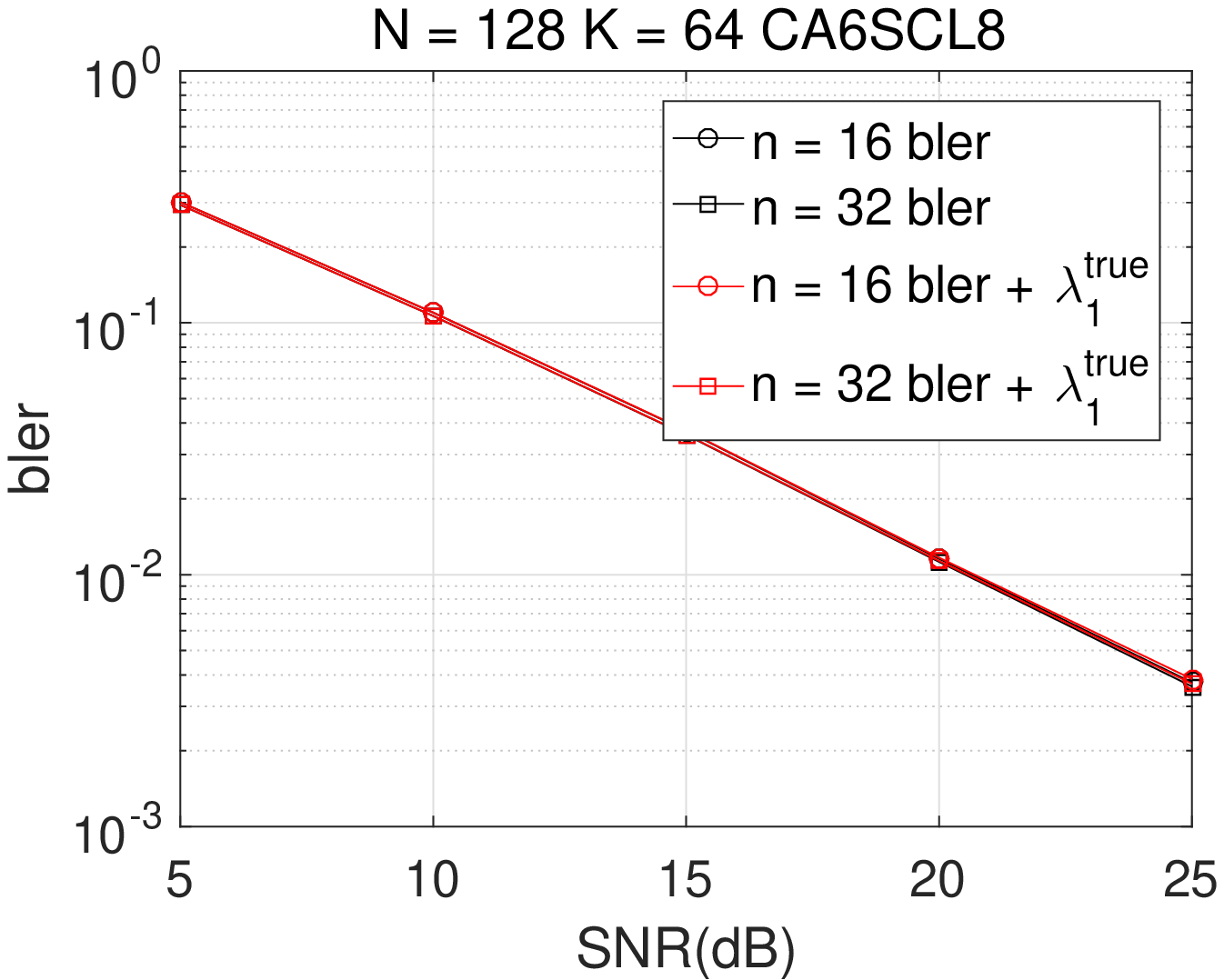}
}
\hfill
\subfigure[$\lambda_1^{true}$]{
\label{fig.5}
\includegraphics[width = .22\textwidth]{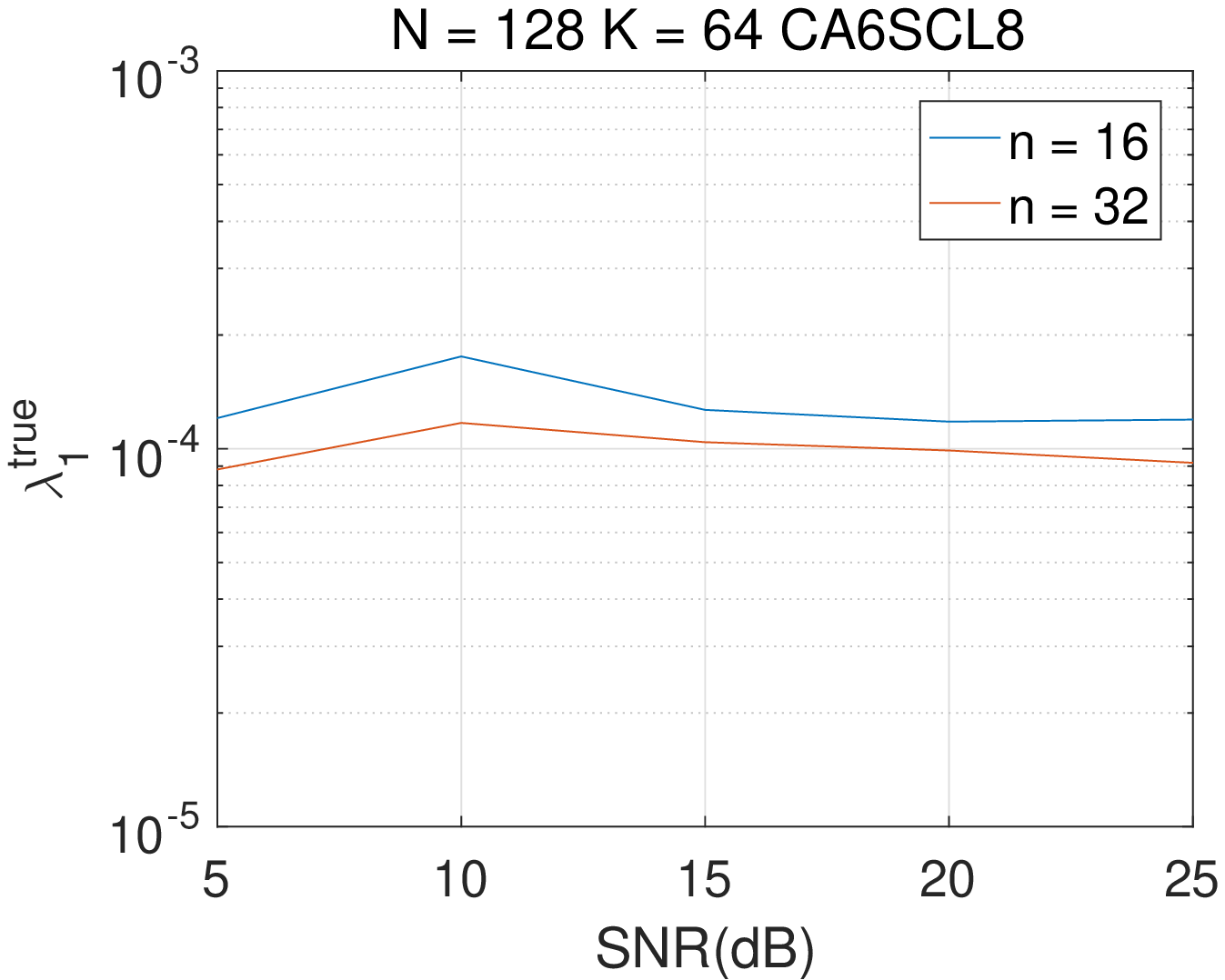}
}
\subfigure[Type-II error rates of random codes]{
\label{fig.7}
\includegraphics[width = .5\textwidth]{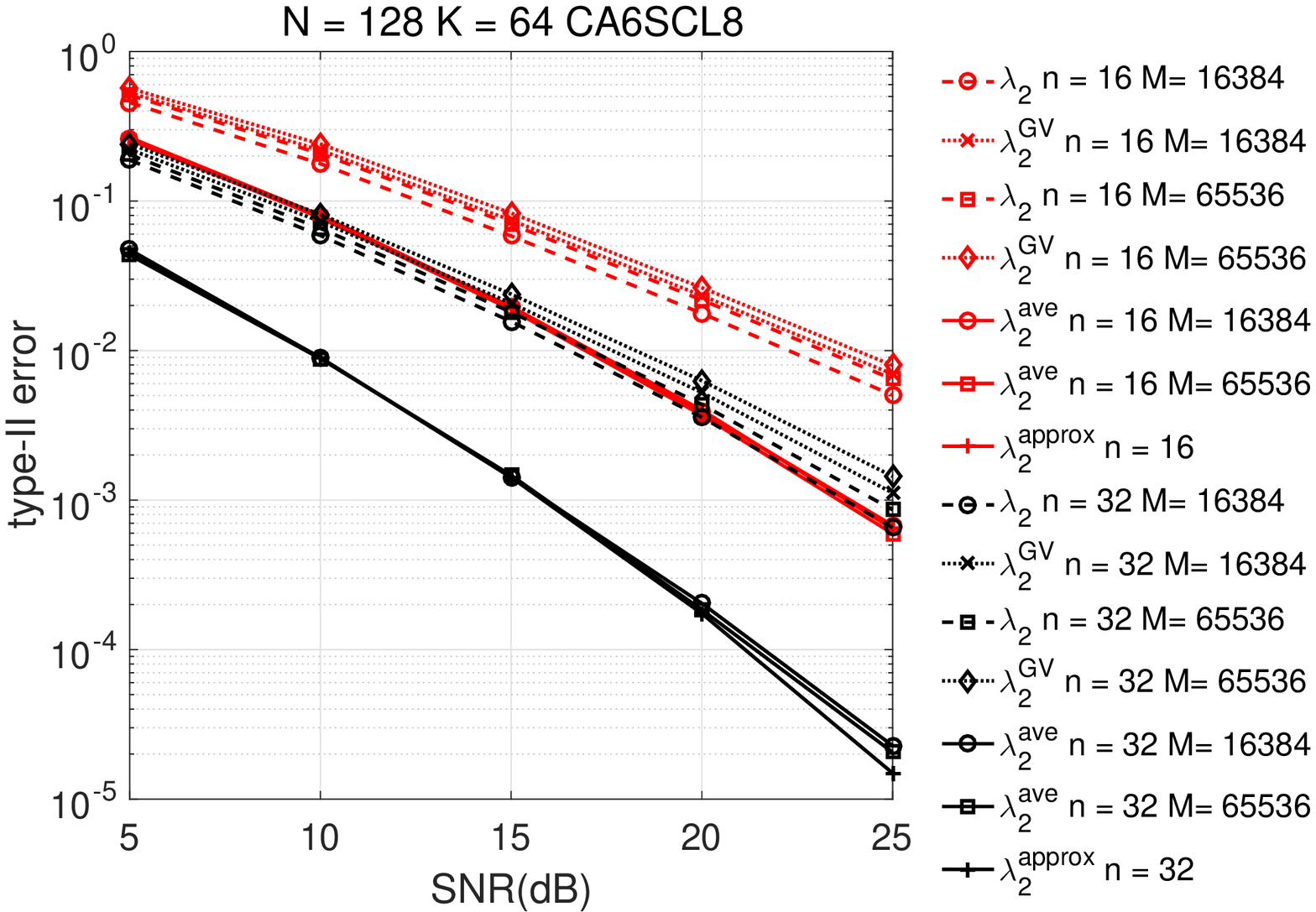}
}
\caption{Identification performance of random codes using the $two$-$look$ metric.}
\label{fig.6}
\end{figure}

\section{Conclusion}
In the paper, we generalize deterministic identification to the practical scenarios by considering block fading channels \emph{without} CSI and finite-length performance. We prove that the identification capacity \emph{without} CSI can still be higher than transmission capacity asymptotically. And the finite-length performance is also analyzed. In practice, we use the blind identification codes for both channel estimation and user identification in PAT schemes. The identification metric is optimized to bound the type-I error rate while minimize the type-II error rate. Simulation results verify the efficiency of the deterministic identification codes in finite blocklength regime.  
    
\newpage

\vspace{12pt}

\end{document}